\begin{document}
%%%%%%%%%%%%%%%%
%\newtheorem{Def}{Definition}[section]
%\newtheorem{Thm}{Theorem}[section]
%\newtheorem{Proposition}{Proposition}[section] 
%\newtheorem{Lemma}{Lemma}[section]
%\theoremstyle{definition}
%\newtheorem*{Proof}{Proof}%[section]
%\newtheorem*{Example}{Example}%[section]  
%\newtheorem{Postulate}{Postulate}[section]
%\newtheorem{Corollary}{Corollary}[section]
%\newtheorem{Remark}{Remark}[section]
%\theoremstyle{remark} 
%
%
\newcommand{\beq}{\begin{equation}}
\newcommand{\beqa}{\begin{eqnarray}}
\newcommand{\eeq}{\end{equation}}
\newcommand{\eeqa}{\end{eqnarray}}
\newcommand{\non}{\nonumber}
\newcommand{\fr}[1]{(\ref{#1})}
\newcommand{\cc}{\mbox{c.c.}}
\newcommand{\nr}{\mbox{n.r.}}
\newcommand{\eq}{\mathrm{eq}}
\newcommand{\B}{\mathrm{B}}
\newcommand{\M}{\mathrm{M}}
\newcommand{\Ising}{\mathrm{Ising}}
\newcommand{\heatbath}{\mathrm{heat\, bath}}
\newcommand{\atan}{\mathrm{Tan}{}^{-1}}
\newcommand{\atanh}{\mathrm{Tanh}{}^{-1}}
\newcommand{\acosh}{\mathrm{Cosh}{}^{-1}}
\newcommand{\bb}{\mbox{\boldmath {$b$}}}
\newcommand{\bbe}{\mbox{\boldmath {$e$}}}
\newcommand{\bt}{\mbox{\boldmath {$t$}}}
\newcommand{\bn}{\mbox{\boldmath {$n$}}}
\newcommand{\br}{\mbox{\boldmath {$r$}}}
\newcommand{\bC}{\mbox{\boldmath {$C$}}}
\newcommand{\bH}{\mbox{\boldmath {$H$}}}
\newcommand{\bp}{\mbox{\boldmath {$p$}}}
\newcommand{\bx}{\mbox{\boldmath {$x$}}}
\newcommand{\bF}{\mbox{\boldmath {$F$}}}
\newcommand{\bT}{\mbox{\boldmath {$T$}}}
\newcommand{\bomega}{\mbox{\boldmath {$\omega$}}}
\newcommand{\ve}{{\varepsilon}}
\newcommand{\e}{\mathrm{e}}
\newcommand{\F}{\mathrm{F}}
\newcommand{\Loc}{\mathrm{Loc}}
\newcommand{\Ree}{\mathrm{Re}}
\newcommand{\Imm}{\mathrm{Im}}
\newcommand{\hF}{\widehat F}
\newcommand{\hL}{\widehat L}
\newcommand{\tA}{\widetilde A}
\newcommand{\tB}{\widetilde B}
\newcommand{\tC}{\widetilde C}
\newcommand{\tL}{\widetilde L}
\newcommand{\tK}{\widetilde K}
\newcommand{\tX}{\widetilde X}
\newcommand{\tY}{\widetilde Y}
\newcommand{\tU}{\widetilde U}
\newcommand{\tZ}{\widetilde Z}
\newcommand{\talpha}{\widetilde \alpha}
\newcommand{\te}{\widetilde e}
\newcommand{\tv}{\widetilde v}
\newcommand{\tx}{\widetilde x}
\newcommand{\ty}{\widetilde y}
\newcommand{\ud}{\underline{\delta}}
\newcommand{\uD}{\underline{\Delta}}
\newcommand{\chN}{\check{N}}
\newcommand{\cA}{{\cal A}}
\newcommand{\cB}{{\cal B}}
\newcommand{\cC}{{\cal C}}
\newcommand{\cD}{{\cal D}}
\newcommand{\cF}{{\cal F}}
\newcommand{\cI}{{\cal I}}
\newcommand{\cL}{{\cal L}}
\newcommand{\cM}{{\cal M}}
\newcommand{\cN}{{\cal N}}
\newcommand{\cO}{{\cal O}}
\newcommand{\cP}{{\cal P}}
\newcommand{\cR}{{\cal R}}
\newcommand{\cS}{{\cal S}}
\newcommand{\cY}{{\cal Y}}
\newcommand{\cZ}{{\cal Z}}
\newcommand{\cU}{{\cal U}}
\newcommand{\cV}{{\cal V}}
\newcommand{\dr}{\mathrm{d}}
\newcommand{\sech}{\mathrm{sech}}
\newcommand{\Exp}{\mathrm{Exp}}
 \newcommand{\GamLamM}[1]{{{\cal S}\,\Lambda^{{#1}}\,\cal{M}}}
 \newcommand{\GTM}{{{\cal S}\, T\cal{M}}}
 \newcommand{\GTC}{{{\cal S}\, T\cal{C}}}
\newcommand{\inp}[2]{\left\langle\,  #1\, , \, #2\, \right\rangle}
\newcommand{\equp}[1]{\overset{\mathrm{#1}}{=}}
\newcommand{\wt}[1]{\widetilde{#1}}
\newcommand{\wh}[1]{\widehat{#1}}
\newcommand{\ch}[1]{\check{#1}}
\newcommand{\ii}{\imath}
\newcommand{\ic}{\iota}
\newcommand{\scrH}{\mathscr{H}}
\newcommand{\mi}{\,\mathrm{i}\,}
\newcommand{\mr}{\,\mathrm{r}\,}
\newcommand{\mbbC}{\mathbb{C}}
\newcommand{\mbbE}{\mathbb{E}}
\newcommand{\mbbR}{\mathbb{R}}
\newcommand{\mbbZ}{\mathbb{Z}}
\newcommand{\ol}[1]{\overline{#1}}
\newcommand{\rmC}{\mathrm{C}}
\newcommand{\rmH}{\mathrm{H}}
\newcommand{\Id}{\mathrm{Id}} 
\newcommand{\avg}[1]{\left\langle\,{#1}\, \right\rangle}
\newcommand{\Leg}{\mathbb{L}}
\newcommand{\avgg}[1]{\left\langle\langle\,{#1}\, \rangle\right\rangle}
%%%%%%%%%%%%%%%%%%%%%%%%%%%%%%%%%%%%%%%%%%%%%%%%%%%%%%
\title{
%Geometric description of  
Expectation variables
on a para-contact metric manifold  
 exactly derived from master equations
}
%%%%%%%%%%%%%%%%%%%%%%%%%%%%%%%%%%%%%%%%%%%%%%%%%%%%%%
\author{  Shin-itiro Goto\orcidID{0000-0002-5249-1054} \and  
Hideitsu Hino\orcidID{0000-0002-6405-4361}} 
\authorrunning{S. Goto and H. Hino}
\institute{
The Institute of Statistical Mathematics,  \\
10-3 Midori-cho, Tachikawa, Tokyo 190-8562, Japan
\qquad 
%\email{sgoto@ism.ac.jp}\qquad
%\email{hino@ism.ac.jp}
}
%\date{\today}
\maketitle
\keywords{master equations \and para-contact manifolds \and 
nonequilibrium statistical mechanics \and information geometry} 
%%%%%%%%%%%%%%%%%
\begin{abstract}%
%%%%%%%%%%%%%%%%%
Based on information and para-contact metric geometries, 
in this paper
a class of dynamical systems  is formulated for 
describing time-development of expectation variables. 
Here  such systems for expectation variables are 
exactly derived from 
continuous-time master equations describing nonequilibrium processes. 
%%%%%%%%%%%%%%% 
\end{abstract}%
%%%%%%%%%%%%%%%
%%%%%%%%%%%%%%%%%%%%%%%%%%
\section{Introduction}
%%%%%%%%%%%%%%%%%%%%%%%%%
Information geometry is a geometrization of mathematical 
statistics \cite{AN,Ay2017}, 
and its differential geometric aspects and 
applications in statistics have been investigated. 
Examples of applications of information geometry include 
thermodynamics,
and some links between equilibrium thermodynamics and 
 information geometry have been clarified.  
In addition, links  between information geometry and contact geometry 
have been argued \cite{Goto2015,Goto2016}. 
In this context, it was found that para-Sasakian 
geometry is suitable for describing thermodynamics 
\cite{Bravetti2015JPhysA,Bravetti2015}, 
where para-Sasakian manifolds are  para-contact metric 
manifolds satisfying some additional condition.  
We then ask how para-contact metric manifolds describe thermodynamics.

In this paper a class of nonequilibrium thermodynamic processes are 
formulated on a para-contact metric manifold. 
Most of discussions in this paper have been in \cite{GH2018}, 
and 
those involving an almost para-contact structure 
are given in this contribution.  

%%%%%%%%%%%%%%
\section{Preliminaries}%Para-Sasakian geometry}
\label{section-preliminaries}
%%%%%%%%%%%%%
In this paper manifolds are assumed smooth and connected. In addition
tensor fields are assumed smooth 
and real. The set of vector fields on a manifold $\cM$ is 
denoted by $\GTM$, and the Lie derivative along $X\in\GTM$ 
by $\cL_{X}$.

In this section definitions and some existing 
statements are summarized. 
(see \cite{Zamkovoy2009,Bravetti2015JPhysA}).

Let $\cM$ be a $(2n+1)$-dimensional manifold ($n\geq 1$).
An {\it almost para-contact structure} on $\cM$ is a triplet 
$(\phi,\xi,\lambda)$, where $\xi$  
is a vector field, 
$\lambda$  
a one-form, $\phi:\GTM\to\GTM$ a $(1,1)$-tensor field 
such that 
$$
\mbox{(i)}:\ 
\phi^{\,2}=\Id-\lambda\otimes\xi,\ 
\mbox{(ii)}:\ 
\lambda(\xi)=1,
\ \mbox{and}\ 
\mbox{(iii)}:\ 
\ker(\lambda)
=\Imm(\phi)=\cD^{\,+}+\cD^{\,-},
$$
where $\ker(\lambda):=\{X\in \GTM\,|\,\lambda(X)=0\,\}$, 
$\Imm(\phi):=\{\phi(X,-)\in\GTM|\,X\in\GTM\}$, 
$\cD^{\,\pm}$ are eigen-spaces whose eigenvalues are $\pm1$, 
and 
$\Id$ is an identity operator. 
A pseudo Riemannian metric tensor field $g$ satisfying 
$$
g(\phi X,\phi Y)
=-\,g(X,Y)+\lambda(X)\,\lambda(Y),\qquad \forall X,Y\in\GTM 
$$
is referred to as a {\it metric tensor 
compatible with an almost para-contact structure}.
It is verified for non-compact manifolds  
that any almost para-contact structure admits a metric tensor field 
compatible with an almost para-contact structure. 
Then $(\cM,\phi,\xi,\lambda,g)$ is referred to as 
an {\it almost para-contact metric manifold}.

On almost para-contact metric  manifolds, one can show 
that 
$$
\lambda\,\phi
=0,\ 
\phi\,\xi=0,\ 
\lambda(X)
=g(X,\xi),\ 
g(\xi,\xi)=1,\ \mbox{and}\ 
g(\phi X,Y)+g(X,\phi Y)
=0, 
$$
for $\forall X,Y\in\GTM$. 
If $g$ of an almost para-contact metric manifold satisfies 
\beq
g(X,\phi Y)
=\frac{1}{2}\dr\lambda(X,Y),\qquad \forall X,Y\in\GTM
\label{para-Sasakian-condition-1}
\eeq
then, $(\cM,\phi,\xi,\lambda,g)$ is referred to as a 
{\it para-contact metric manifold}, 
where the convention  of the numerical factor  
$\dr\lambda(X,Y)=X\lambda(Y)-Y\lambda(X)-\lambda([X,Y])$
, ($[X,Y]:=XY-YX$) has been adopted.
Para-Sasakian manifolds are para-contact metric manifolds 
satisfying the so-called normality condition.

Coordinate expressions were 
given for a  para-contact metric manifold  
( and a para-Sasakian manifold ) $(\cM,\phi,\xi,\lambda,g)$ 
in \cite{Bravetti2015JPhysA}. 
They are summarized here.
Let $(x,y,z)$ be coordinates for $\cM$ with 
$x=\{x^{\,1},\ldots,x^{\,n}\}$ and $y=\{y_{\,1},\ldots,y_{\,n}\}$ such that 
$\lambda=\dr z-y_{\,a}\dr x^{\,a}$ where the Einstein convention has been used.
Introduce the pseudo-Riemannian metric tensor field, referred to as the 
{\it Mruagala metric tensor field} \cite{Mrugala1990},  
\beq
g^{\,\M}=
\frac{1}{2}\dr x^{\,a}\otimes \dr y_{\,a}
+\frac{1}{2}\dr y_{\,a}\otimes \dr x^{\,a}
+\lambda\otimes\lambda,
\label{metric-G-Bravetti}
\eeq
which is shown to induce a para-contact metric manifold.
In what follows we consider the case where $y_{\,a}>0$ for all 
$a\in\{1,\ldots,n\}$.
Introduce the co-frame 
$\{\wh{\theta}^{\,0},\wh{\theta}_{\,-}^{\,1},\wh{\theta}_{\,+}^{\,1},\ldots,
\wh{\theta}_{\,-}^{\,n},\wh{\theta}_{\,+}^{\,n}\}$ 
and frame $\{e_{\,0},e_{\,1}^{\,-},e_{\,1}^{\,+},\ldots,e_{\,n}^{\,-},e_{\,n}^{\,+}\}$ 
with  
\beqa
\wh{\theta}^{\,0}
&:=&\lambda,\qquad 
\wh{\theta}_{\,\pm}^{\,a}
:=\frac{1}{2\sqrt{y_{\,a}}}\left[\,y_{\,a}\dr x^{\,a}\pm \dr y_{\,a}\,\right],
\qquad
\mbox{(no sum over $a$)},
\non\\
e_{\,0}
&:=&\xi,\qquad 
e_{\,a}^{\,\pm}
:=\sqrt{y_{\,a}}\left[\,
\frac{1}{y_{\,a}}\left(\frac{\partial}{\partial x^{\,a}}+y_{\,a}\frac{\partial}{\partial z}\right)\pm \frac{\partial}{\partial y_{\,a}}
\,\right],\qquad\mbox{(no sum over $a$)},
\non
\eeqa
so that 
$$
\wh{\theta}^{\,0}(e_{\,0})
=1,\qquad 
\wh{\theta}_{\,+}^{\,a}(e_{\,b}^{+})
=\wh{\theta}_{\,-}^{\,a}(e_{\,b}^{-})
=\delta_{\,b}^{\,a},\qquad\mbox{others vanish,}
$$ 
where $\delta_{\,b}^{\,a}$ is the Kronecker delta, 
giving unity for $a=b$ and zero otherwise. 
One can then show that 
$$
g^{\,\M}=\wh{\theta}^{\,0}\otimes\,\wh{\theta}^{\,0}
+\sum_{a=1}^{n}\wh{\theta}_{\,+}^{\,a}\otimes\,\wh{\theta}_{\,+}^{\,a}
-\sum_{a=1}^{n}\wh{\theta}_{\,-}^{\,a}\otimes\,\wh{\theta}_{\,-}^{\,a},
\quad
\xi=\frac{\partial}{\partial z},\quad 
\phi=-\,\wh{\theta}_{\,-}^{\,a}\otimes e_{\,a}^{\,+}
-\,\wh{\theta}_{\,+}^{\,a}\otimes e_{\,a}^{\,-}.
$$
Then, introducing the abbreviation $\phi(X):=\phi(X,-)\in\GTM$ 
for $X\in\GTM$, one has
$\phi(\, e_{\,a}^{\,+}\,)=-\,e_{\,a}^{\,-}$,
$\phi(\, e_{\,a}^{\,-}\,)=-\,e_{\,a}^{\,+}$, and 
$\phi(\, e_{\,0}\,)=0$.

In the context of geometry of thermodynamics, 
contact manifold 
is identified with the so-called 
thermodynamic phase space \cite{Mrugala2000}. 
This manifold 
 is defined as follows (see \cite{Silva2008} for details).
Let $\cC$ be a $(2n+1)$-dimensional manifold ($n=1,2,\ldots$), and $\lambda$ a one-form.  
If $\lambda$
satisfies 
$$
\lambda\wedge\underbrace{\dr\lambda\wedge\cdots\wedge\dr\lambda}_{n}\neq 0,
$$
then the pair $(\cC,\lambda)$ is referred to as a {\it contact manifold}, 
and $\lambda$ a {\it contact one-form}. 
It has been known as the 
Darboux theorem  that there exists a special set of coordinates 
$(x,y,z)$ with $x=\{x^{\,1},\ldots,x^{\,n}\}$ and $y=\{y_{\,1},\ldots,y_{\,n}\}$  
such that $\lambda=\dr z-y_{\,a}\dr x^{\,a}$. 
It follows from \fr{para-Sasakian-condition-1} that para-contact metric 
manifolds are contact manifolds.

The {\it Legendre submanifold} $\cA\subset\cC$ 
is an $n$-dimensional submanifold where $\lambda|_{\,\cA}=0$ holds.
One can verify that 
\beq
\cA_{\,\varpi}
=\left\{\,(x,y,z)\ \bigg|\ 
y_{\,a}
=\frac{\partial\varpi}{\partial x^{\,a}},\quad
\mbox{and}\quad 
z=\varpi(x)\,\right\}, 
\label{Legendre-submanifold-psi}
\eeq
is a Legendre submanifold, where $\varpi:\cC\to\mbbR$  is a function of $x$ 
on $\cC$. The submanifold $\cA_{\,\varpi}$ is referred to as the 
{\it Legendre submanifold generated by $\varpi$}, and is used for describing 
equilibrium thermodynamic systems \cite{Mrugala2000}. 

As shown in \cite{Goto2015} and \cite{Bravetti2015},  
a class of relaxation processes, 
initial states approach to the equilibrium state as time develops, 
can be formulated as contact Hamiltonian vector fields on contact manifolds.  
This statement on a class of contact Hamiltonian vector fields 
can be summarized as follows.
%%%%%%%%%%%%%%%%%%%5
\begin{proposition}
\label{proposition-goto-2015}
%%%%%%%%%%%%%%%%%
(Legendre submanifold as an attractor, \cite{Goto2015}). 
Let $(\cC,\lambda)$ be a $(2n+1)$-dimensional contact manifold with $\lambda$ 
being a contact form, $(x,y,z)$ its coordinates so that 
$\lambda=\dr z-y_{\,a}\dr x^{\,a}$,  
and 
$\varpi$ a function
depending only on $x$. Then, one has 
%%%%%%%%%%%%%%%%%%%%%
\begin{enumerate}
\item
%%%%%%
The contact Hamiltonian vector field associated with 
the contact Hamiltonian $h:\cC\to\mbbR$ such that
$h(x,y,z)=\varpi(x)-z,$
gives 
\beq
\frac{\dr}{\dr t}x^{\,a}
=0,\qquad 
\frac{\dr}{\dr t}y_{\,a}
=\frac{\partial\,\varpi}{\partial x^{\,a}}-y_{\,a},\qquad 
\frac{\dr}{\dr t}z
=\varpi(x)-z.
\label{contact-hamilton-flow-pi}
\eeq
%%%%%%
\item
%%%%%%%
The Legendre submanifold generated by $\varpi$, given by 
\fr{Legendre-submanifold-psi}, is an invariant manifold for 
the contact Hamiltonian vector field.
%%%%%%
\item
%%%%%%%
Every point on $\cC\setminus\cA_{\,\varpi}$ approaches to $\cA_{\,\varpi}$ along 
an integral curve as time develops. 
Equivalently $\cA_{\,\varpi}$ is an attractor in $\cC$.  
%%%%%%%
\item 
%%%%%%%
Let $\{x(0),y(0),z(0)\}$ be a point on $\cC\setminus\cA_{\,\varpi}$.
Then for any $t\in\mbbR$,  
$$
h(x(t),y(t),z(t))
=\exp(-\,t)\, h(x(0),y(0),z(0)).
$$
%%%%%%%%%%%%%%%%%  
\end{enumerate}
%%%%%%%%%%%%%%%%%
\end{proposition}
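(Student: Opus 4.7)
My approach is to prove all four claims by direct computation in the Darboux coordinates $(x,y,z)$ in which $\lambda=\dr z-y_{\,a}\dr x^{\,a}$. Claim 1 is obtained by specializing the general formula for a contact Hamiltonian vector field to $h(x,y,z)=\varpi(x)-z$. Claims 2, 3, 4 then reduce to solving the resulting ODE system, which is strikingly simple because the $x^{\,a}$ are constants of motion, so the $(y,z)$-equations decouple into scalar linear ODEs of the form $\dot u = c - u$ with $c$ fixed along each trajectory.

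For Claim 1, I would recall (or re-derive) that the contact Hamiltonian vector field $X_{h}$ associated to a function $h$ is characterized by $\lambda(X_{h}) = -\,h$ together with $\iota_{X_h}\dr\lambda = \dr h - (\xi h)\,\lambda$, which in the Darboux coordinates above yields the component formulas $\dot x^{\,a} = -\,\partial h/\partial y_{\,a}$, $\dot y_{\,a} = \partial h/\partial x^{\,a} + y_{\,a}\,\partial h/\partial z$, and $\dot z = h - y_{\,a}\,\partial h/\partial y_{\,a}$. Plugging in $\partial h/\partial y_{\,a}=0$, $\partial h/\partial x^{\,a}=\partial \varpi/\partial x^{\,a}$, $\partial h/\partial z=-1$ gives the system \fr{contact-hamilton-flow-pi} up to sign conventions, which I would fix to match the paper.

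For Claim 2, the key observation is that on any trajectory the $x$-coordinates are frozen, so $(\partial\varpi/\partial x^{\,a})(x(t))=(\partial\varpi/\partial x^{\,a})(x(0))$ is a constant. If the initial point lies in $\cA_{\,\varpi}$, this constant equals $y_{\,a}(0)$, so $y_{\,a}(t)\equiv y_{\,a}(0)$ is the unique solution of the $y$-equation with that initial datum; likewise $z(t)\equiv \varpi(x(0))=z(0)$. For Claim 3, for an arbitrary initial point I set $c_{\,a}:=(\partial\varpi/\partial x^{\,a})(x(0))$ and $c:=\varpi(x(0))$, giving the solutions $y_{\,a}(t)=c_{\,a}+(y_{\,a}(0)-c_{\,a})\,\e^{-t}$ and $z(t)=c+(z(0)-c)\,\e^{-t}$, both converging to the fiber of $\cA_{\,\varpi}$ over $x(0)$ as $t\to\infty$. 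Claim 4 then follows either by substituting these solutions into $h=\varpi(x)-z$ or, more slickly, by differentiating $h$ along the flow to get $\dot h=-\dot z=-h$, whose unique solution with the given initial value is $h(t)=\e^{-t}h(0)$.

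The only place where something could go wrong is in Claim 1, where getting the sign conventions of the contact Hamiltonian vector field consistent with $\lambda=\dr z-y_{\,a}\dr x^{\,a}$ and with the equations \fr{contact-hamilton-flow-pi} as stated requires care; everything after that is an exercise in linear ODEs. I therefore expect the cleanest exposition to state the component formulas once, verify the specialization to $h=\varpi-z$, and then deduce Claims 2--4 as immediate consequences of the explicit flow.
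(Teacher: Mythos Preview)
Your proposal is correct; the argument via the Darboux-coordinate component formulas for the contact Hamiltonian vector field, followed by explicit integration of the resulting linear ODEs, is exactly the standard route. Note, however, that the present paper does not itself prove this proposition: it is quoted from \cite{Goto2015} and stated without proof, so there is no in-paper argument to compare against. Your sketch is self-contained and would serve as a complete proof; the only care needed, as you already flag, is fixing the sign convention so that $\lambda(X_{h})$ and $\iota_{X_{h}}\dr\lambda$ yield precisely \fr{contact-hamilton-flow-pi}.
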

%%%%%%%%%%%%%%%%%%

%%%%%%%%%%%%%%%%%%%%%%%%
\section{Solvable master equations}
\label{section-solvable-heat-bath}
%%%%%%%%%%%%%%%%%%%%%%%
In this section 
%the heat-bath method known as one of MCMC methods is reviewed 
a set of master equations with particular Markov kernels
is introduced, 
and then its solvability is shown. 

Let $\Gamma$ be a set of finite discrete states,  $t\in\mbbR$ time, and 
$p(j,t)\,\dr t$ a probability that a state $j\in \Gamma$ is found in between 
$t$ and $t+\dr t$. 
The first objective is to realize a given distribution function 
$p_{\,\theta}^{\,\eq}$ 
that can be written as 
$$
p_{\,\theta}^{\,\eq}(j)
=\frac{\pi_{\,\theta}(j)}{Z(\theta)}, \qquad
Z(\theta)
:=\sum_{j\in \Gamma}\pi_{\theta}(j)
$$
where $\theta\in\Theta\subset\mbbR^{\,n}$ is a parameter set with 
$\theta=\{\theta^{\,1},\ldots,\theta^{\,n}\}$, and 
$Z :\Theta\to\mbbR$  the so-called partition function so that 
$p_{\,\theta}^{\,\eq}$ is normalized  :  %$,$ so that 
$\sum_{j\in \Gamma}p_{\,\theta}^{\,\eq}(j)=1.$

In what follows, attention is focused on 
a class of master equations.
Let $p:\Gamma\times\mbbR\to \mbbR_{\geq 0}$  
be a time-dependent probability function. 
Then, consider the set of master equations 
\beq
\frac{\partial}{\partial t}p(j,t)
=\sum_{j'(\neq j)}\left[\,
w(j|j^{\,\prime})\,p(j',t)-w(j^{\,\prime}|j)\,p(j,t)
\,\right],
\label{master-equation-general}
\eeq
where $w:\Gamma\times\Gamma\to I$, ($I:=[\,0,1\,]\subset\mbbR$)  
is such that    
$w(j|j^{\,\prime})$ denotes a probability that a state jumps from 
$j^{\,\prime}$ to $j$. 
With \fr{master-equation-general} and the assumptions 
$$
w_{\,\theta}(j|j^{\,\prime})
=p_{\,\theta}^{\,\eq}(j),
\qquad\mbox{and}\qquad
p_{\,\theta}^{\,\eq}(j)\neq 0,\quad
\forall j\in\Gamma,
$$
one derives the {\it solvable master equations} :  
\beq
\frac{\partial}{\partial t}p(j,t)
=p_{\,\theta}^{\,\eq}(j)-p(j,t).
\label{master-equation-solvable}
\eeq
An explicit form of $p(j,t)$ is obtained by solving \fr{master-equation-solvable}.
Then the following proposition can easily be shown.
%%%%%%%%%%%%%%%%%%%%%%
\begin{proposition}
\label{fact-master-equation-solution}
%%%%%%%%%%%%%%%%%%%%
(Solutions of the master equations, \cite{GH2018}). 
The solution of \fr{master-equation-solvable} is 
$$
p(j,t)
=\e^{\,-t}\,p(j,0)+(1-\e^{\,-t})p_{\,\theta}^{\,\eq}(j),\quad
\mbox{from which}\quad 
\lim_{t\to\infty}p(j,t)
=p_{\,\theta}^{\,\eq}(j).
$$
%%%%%%%%%%%%%%%%5
\end{proposition}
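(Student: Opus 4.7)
The plan is to observe that, for each fixed state $j\in\Gamma$, equation \fr{master-equation-solvable} is a decoupled, constant-coefficient, linear first-order ODE in $t$: the right-hand side involves only $p(j,t)$ itself and the time-independent number $p_{\,\theta}^{\,\eq}(j)$, with no coupling to other states $j'\neq j$. So the whole system $\{p(j,\cdot)\}_{j\in\Gamma}$ can be solved state-by-state by elementary means, and I do not need any structure specific to a Markov generator.

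The natural first step is the shift of variable $q(j,t):=p(j,t)-p_{\,\theta}^{\,\eq}(j)$, which by direct substitution converts \fr{master-equation-solvable} into
\[
\frac{\partial}{\partial t}q(j,t)=-\,q(j,t),
\]
with initial datum $q(j,0)=p(j,0)-p_{\,\theta}^{\,\eq}(j)$. Uniqueness of solutions to this scalar linear ODE (or equivalently, the integrating-factor computation with factor $\e^{\,t}$) yields $q(j,t)=\e^{\,-t}\,q(j,0)$. Reassembling gives
\[
p(j,t)=p_{\,\theta}^{\,\eq}(j)+\e^{\,-t}\bigl[p(j,0)-p_{\,\theta}^{\,\eq}(j)\bigr]
=\e^{\,-t}\,p(j,0)+(1-\e^{\,-t})\,p_{\,\theta}^{\,\eq}(j),
\]
which is the claimed formula. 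Alternatively one can simply plug this expression into \fr{master-equation-solvable} and check equality, using the fact that $p_{\,\theta}^{\,\eq}(j)$ does not depend on $t$; uniqueness of the initial-value problem then closes the argument.

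The limit statement is immediate from the formula: since $\e^{\,-t}\to 0$ as $t\to\infty$, one gets $\lim_{t\to\infty}p(j,t)=p_{\,\theta}^{\,\eq}(j)$. There is really no main obstacle here — the content of the proposition is purely that the assumption $w_{\,\theta}(j|j')=p_{\,\theta}^{\,\eq}(j)$ decouples the master equation into independent scalar relaxation equations, after which the standard solution of $\dot u=-u+c$ applies term-wise. The only small point to mention explicitly is that $\sum_{j\in\Gamma}p(j,t)=1$ is preserved for all $t\geq 0$, which follows by summing the explicit formula and using normalization of $p(\cdot,0)$ and $p_{\,\theta}^{\,\eq}$, confirming that the solution remains a probability distribution.
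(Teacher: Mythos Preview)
Your proposal is correct and matches the paper's intent: the paper does not spell out a proof for this proposition at all, merely remarking that it ``can easily be shown,'' and your integrating-factor/shift argument is exactly the elementary verification one would supply. Your additional observation about preservation of normalization is a nice bonus not mentioned in the paper.
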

%%%%%%%%%%%%%%%%%%

With this proposition, one notices 
that every solution $p$ depends on $\theta$. 
Taking into account this, 
$p(j,t)$ is denoted $p(j,t;\theta)$. 
Also notice that the equilibrium state is realized 
with \fr{master-equation-solvable} as the time-asymptotic limit.
%%%%%%%%%%%%%%%5
\section{Time-development of observables}
\label{section-time-development-observables}
%%%%%%%%%%%%%%%5
In this section differential equations describing 
time-development of 
observables are derived with the solvable master equations 
 under some assumptions. 
 Then, the time-asymptotic limit of such observables is stated.    
Here {\it observable} in this paper 
is defined as a function that does not depend on 
a random variable or a state.
Thus expectation values with respect to a probability distribution function  
are observables. 

Let $\cO_{\,a}:\Gamma\to \mbbR$ be a function 
with $a\in\{1,\ldots,n\}$, and 
$p:\Gamma\times\mbbR\to \mbbR_{\geq 0}$  
a distribution function that 
follows \fr{master-equation-solvable}. Then 
$$
\avg{\cO_{\,a}}_{\,\theta}(t)
:=\sum_{j\in \Gamma}\cO_{\,a}(j)\,p(j,t;\theta),\qquad \mbox{and}\qquad
\avg{\cO_{\,a}}_{\,\theta}^{\,\eq}
:=\sum_{j\in \Gamma}\cO_{\,a}(j)\,p_{\,\theta}^{\,\eq}(j),
$$
are referred to as the {\it expectation variable} of $\cO_{\,a}$ 
with respect to $p$,  
and that with respect to $p_{\,\theta}^{\,\eq}$, respectively.

If 
an equilibrium 
distribution function belongs to the 
exponential family, then the function $\Psi^{\,\eq}:\Theta\to\mbbR$ with
\beq
\Psi^{\,\eq}(\theta)
:=\ln\left(\,\sum_{j\in \Gamma}\e^{\,\theta^{\,b}\cO_{\,b}(j)}
\,\right),
\label{definition-Psi-eq}
\eeq
plays various roles. Here and in what follows, \fr{definition-Psi-eq} 
is assumed to exist.
In the context of information geometry, this function is referred to as 
a {\it $\theta$-potential}. 
Discrete distribution functions 
are considered in this paper 
and it has been known that such distribution functions belong to 
the exponential family, then $\Psi^{\,\eq}$ in 
\fr{definition-Psi-eq} also plays a role throughout this paper.   
The value $\Psi^{\,\eq}(\theta)$ can be interpreted as the 
negative dimension-less free-energy. 
It follows from \fr{definition-Psi-eq} that 
$$
\avg{\cO_{\,a}}_{\,\theta}^{\,\eq}
=\frac{\partial\Psi^{\,\eq}}{\partial\theta^{\,a} }.
$$

One then can generalize $\Psi^{\,\eq}$ 
defined at equilibrium state  
to a function defined in  
nonequilibrium states as 
$\Psi:\Theta\times\mbbR\to\mbbR$, 
$$
\Psi(\theta,t)
:=\left(\frac{1}{J^{\,0}}
\sum_{j\in \Gamma}\frac{p(j,t;\theta)}{p_{\theta}^{\,\eq}(j)}\right)
\Psi^{\,\eq}(\theta),\qquad\mbox{where}\qquad
J^{\,0}
:=\sum_{j'\in \Gamma} 1. 
$$
Since $p_{\,\theta}^{\,\eq}(j)\neq 0$ and $\Psi^{\,\eq}(\theta)<\infty$  
by assumptions, the function $\Psi$ exists. 
Generalizing the idea for the equilibrium case, the function $\Psi$ may be 
interpreted as a nonequilibrium negative dimension-less free-energy. 

A set of differential equations for $\{\avg{\cO_{\,a}}_{\,\theta}\}$ and $\Psi$
can be derived as follows.
%%%%%%%%%%%%%%%%%%%%
\begin{proposition}
\label{moment-dynamics}
%%%%%%%%%%%%%%%%%%%
(Dynamical system obtained from the master equations, \cite{GH2018}).    
Let $\theta$ be a time-independent parameter set characterizing a 
discrete 
distribution function $p_{\,\theta}^{\,\eq}$.
Then $\{\avg{\cO_{\,a}}_{\,\theta}\}$ and $\Psi$ are solutions to 
the differential equations on $\mbbR^{\,2n+1}$
$$
\frac{\dr}{\dr t}\theta^{\,a}
=0,\qquad
\frac{\dr}{\dr t}
\avg{\cO_{\,a}}_{\,\theta}
=-\,\avg{\cO_{\,a}}_{\,\theta}
+\frac{\partial\,\Psi^{\,\eq}}{\partial\theta^{\,a}},\quad\mbox{and}\quad 
\frac{\dr}{\dr t}\Psi
=-\,\Psi+\Psi^{\,\eq}.
$$
%%%%%%%%%%%%%%%%%%%%%
\end{proposition}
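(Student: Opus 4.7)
The plan is to prove each of the three equations by direct differentiation in $t$, using Proposition~\ref{fact-master-equation-solution} (or equivalently the solvable master equation \fr{master-equation-solvable}) as the only dynamical input. Since $\theta$ is assumed time-independent, the first equation $\dot\theta^{\,a}=0$ is immediate and needs no argument. The work therefore consists of two short computations, one for $\avg{\cO_{\,a}}_{\,\theta}$ and one for $\Psi$.

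For the second equation, I would start from the definition $\avg{\cO_{\,a}}_{\,\theta}(t)=\sum_{j\in\Gamma}\cO_{\,a}(j)\,p(j,t;\theta)$, exchange $d/dt$ and the finite sum over $\Gamma$, and substitute $\partial_t p(j,t;\theta)=p_{\,\theta}^{\,\eq}(j)-p(j,t;\theta)$ from \fr{master-equation-solvable}. This splits the sum into $\sum_j\cO_{\,a}(j)\,p_{\,\theta}^{\,\eq}(j)-\sum_j\cO_{\,a}(j)\,p(j,t;\theta)=\avg{\cO_{\,a}}_{\,\theta}^{\,\eq}-\avg{\cO_{\,a}}_{\,\theta}$. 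The claim then follows by invoking the identity $\avg{\cO_{\,a}}_{\,\theta}^{\,\eq}=\partial\Psi^{\,\eq}/\partial\theta^{\,a}$ that was already established from \fr{definition-Psi-eq}.

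For the third equation, I would differentiate the definition
$$
\Psi(\theta,t)=\frac{\Psi^{\,\eq}(\theta)}{J^{\,0}}\sum_{j\in\Gamma}\frac{p(j,t;\theta)}{p_{\,\theta}^{\,\eq}(j)}.
$$
Since $\theta$ is time-independent, both $\Psi^{\,\eq}(\theta)$ and $p_{\,\theta}^{\,\eq}(j)$ are $t$-independent and pass outside the derivative. Substituting \fr{master-equation-solvable} into the remaining $\partial_t p$ and using $p_{\,\theta}^{\,\eq}(j)\neq 0$ gives
$$
\frac{\dr}{\dr t}\Psi=\frac{\Psi^{\,\eq}}{J^{\,0}}\sum_{j\in\Gamma}\left[\,1-\frac{p(j,t;\theta)}{p_{\,\theta}^{\,\eq}(j)}\,\right]=\Psi^{\,\eq}-\Psi,
$$
where in the last step I use $\sum_{j\in\Gamma}1=J^{\,0}$ to cancel the first term and the definition of $\Psi$ itself to recognize the second.

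There is no real obstacle: the whole argument is a pair of linear computations whose only nontrivial ingredient is the simple form of the right-hand side of \fr{master-equation-solvable}. The one place to be a bit careful is the third equation, where one should verify that the proportionality constant is exactly chosen so that the normalizing factor $1/J^{\,0}$ and the sum $\sum_{j\in\Gamma}1=J^{\,0}$ cancel neatly; this is precisely the reason $J^{\,0}$ was introduced in the definition of $\Psi$.
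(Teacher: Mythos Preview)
Your proposal is correct. The paper itself does not include a proof of this proposition---it is stated with a citation to \cite{GH2018}---so there is no in-paper argument to compare against; your direct differentiation of the definitions using the solvable master equation \fr{master-equation-solvable} and the identity $\avg{\cO_{\,a}}_{\,\theta}^{\,\eq}=\partial\Psi^{\,\eq}/\partial\theta^{\,a}$ is exactly the natural (and essentially unique) route, and all steps are valid.
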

%%%%%%%%%%%%%%%%%%%%%%
%%%%%%%%%%%%5
\begin{remark}
%%%%%%%%%%%%%%
The explicit time-dependence for this system is obtained as 
$\theta^{\,a}(t)=\theta^{\,a}(0)$, and $\Psi(\theta,t)
=\e^{\,-\,t}\left[\,\Psi(0)-\Psi^{\,\eq}(\theta)\,\right]
+\Psi^{\,\eq}(\theta)$, and 
$$
\avg{\cO_{\,a}}_{\,\theta}(t)
=\e^{\,-\,t}\left[\,\avg{\cO_{\,a}}_{\,\theta}(0)
-\frac{\partial\Psi^{\,\eq}}{\partial \theta^{\,a}}\,\right]
+\frac{\partial\Psi^{\,\eq}}{\partial \theta^{\,a}}.
$$
From these,
one can verify that the time-asymptotic limit 
of these variables are those defined at equilibrium. 
In this paper this dynamical system is referred to as the 
{\it moment dynamical system}. 
%%%%%%%%%%%%%
\end{remark}
%%%%%%%%%%%%

%%%%%%%%%%%%%%%%%%%%%%%%%%%%%%%%%%%%%%%%%%%%%%%% 
\section{Geometric description of dynamical systems }
\label{section-geometry-solvable-extended-heat-bath}
%%%%%%%%%%%%%%%%%%%%%%%%%%%%%%%%%%%%%%%%%%%%%%%%
Several geometrization of nonequilibrium states for some 
models and methods  
have been proposed. 
Yet, suffice to say that there
remains no general consensus on how best to extend 
a geometry of 
equilibrium states to a geometry 
of  nonequilibrium states.
In this section, a geometrization of 
nonequilibrium states is proposed for the 
moment dynamical system.

%%%%%%%%%%%%%%%%%%%%%%%%%%%%%%%%%%%%%%%%%%%%%%%%
\subsection{Geometry of equilibrium states}
\label{section-geometry-equilibrium-states}
%%%%%%%%%%%%%%%%%%%%%%%%%%%%%%%%%%%%%%%%%%%%%%%%

Equilibrium states 
 are identified with the Legendre submanifolds generated by functions    
 in the context of 
geometric thermodynamics \cite{Mrugala1978,Mrugala2000}.  
Besides, in the context of information geometry, equilibrium states are 
identified with dually flat spaces \cite{AN}.  
Combining these identifications, one has the following. 
%%%%%%%%%%%%%%%%%%%
\begin{proposition}
%%%%%%%%%%%%%%%%%%
(A contact manifold and a strictly convex function induce a dually flat 
space, \cite{Goto2015}).  
Let $(\cC,\lambda)$ be a contact manifold, $(x,y,z)$ a set of coordinates such 
that $\lambda=\dr z-y_{\,a}\dr x^{\,a}$ with $x=\{x^{\,1},\ldots,x^{\,n}\}$ and 
$y=\{y_{\,1},\ldots,y_{\,n}\}$, 
and $\varpi$ a strictly convex function depending only on $x$.
Then, 
 $((\cC,\lambda),\varpi)$ induces 
an  
 $n$-dimensional dually flat space
%%%%%%%%%%%%%%%%%%5
\end{proposition}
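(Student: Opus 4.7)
The plan is to realize the dually flat space as the Legendre submanifold $\cA_{\,\varpi}\subset\cC$ generated by $\varpi$, equipped with a Hessian metric and a pair of mutually dual affine structures coming from the Darboux coordinates and their Legendre conjugates. Since $\varpi$ depends only on $x$, the submanifold $\cA_{\,\varpi}$ defined in \fr{Legendre-submanifold-psi} is parametrized diffeomorphically by $x=(x^{\,1},\ldots,x^{\,n})$, so it carries a natural $n$-dimensional manifold structure.

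First I would define the metric tensor field on $\cA_{\,\varpi}$ by the Hessian of $\varpi$, namely $g_{\,ab}(x)=\partial^{\,2}\varpi/\partial x^{\,a}\partial x^{\,b}$, which is positive definite because $\varpi$ is strictly convex. A short consistency check is that this coincides (up to the conventional factor) with the pullback of the symmetric part of $g^{\,\M}$ given in \fr{metric-G-Bravetti} to $\cA_{\,\varpi}$ via the embedding $x\mapsto(x,\partial\varpi/\partial x,\varpi(x))$: along $\cA_{\,\varpi}$ one has $\dr y_{\,a}=(\partial^{\,2}\varpi/\partial x^{\,a}\partial x^{\,b})\,\dr x^{\,b}$, which fed into the $\dr x\otimes\dr y$ term produces exactly the Hessian, while $\lambda|_{\,\cA_{\,\varpi}}=0$ kills the $\lambda\otimes\lambda$ contribution.

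Next I would introduce the dual coordinates $\eta_{\,a}:=\partial\varpi/\partial x^{\,a}$, which by strict convexity of $\varpi$ give a global diffeomorphism $x\leftrightarrow\eta$ on $\cA_{\,\varpi}$, and the Legendre-conjugate potential $\varpi^{\,*}(\eta):=x^{\,a}(\eta)\,\eta_{\,a}-\varpi(x(\eta))$. A direct computation then gives $\partial\varpi^{\,*}/\partial\eta_{\,a}=x^{\,a}$ and $\partial^{\,2}\varpi^{\,*}/\partial\eta_{\,a}\partial\eta_{\,b}=g^{\,ab}$, the inverse of $g_{\,ab}$. Declaring $\{x^{\,a}\}$ and $\{\eta_{\,a}\}$ to be affine coordinates of flat connections $\nabla$ and $\nabla^{\,*}$ respectively, the duality relation $Z\,g(X,Y)=g(\nabla_{\,Z}X,Y)+g(X,\nabla^{\,*}_{\,Z}Y)$ follows from the standard identity $\partial\eta_{\,a}/\partial x^{\,b}=g_{\,ab}$, so $(\cA_{\,\varpi},g,\nabla,\nabla^{\,*})$ is a dually flat space in the sense of information geometry.

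The only nontrivial step is matching the contact-geometric data with the information-geometric construction: one must verify that the coordinates singled out by the Darboux form $\lambda=\dr z-y_{\,a}\dr x^{\,a}$ restrict, on $\cA_{\,\varpi}$, precisely to the potential/dual-coordinate pair $(x^{\,a},\eta_{\,a})$ compatible with $\varpi$ and $\varpi^{\,*}$. This is where the hypothesis that $\varpi$ depends only on $x$ is essential, since it is exactly this dependence that makes $y_{\,a}=\partial\varpi/\partial x^{\,a}$ on $\cA_{\,\varpi}$ play the role of the dual coordinate $\eta_{\,a}$. The remaining verifications are routine manipulations with the Legendre transform.
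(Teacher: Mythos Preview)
Your argument is correct and is precisely the intended construction: realize the dually flat space as the Legendre submanifold $\cA_{\,\varpi}$, endow it with the Hessian metric of $\varpi$, and obtain the dual flat connection via the Legendre transform $x\leftrightarrow\eta=\partial\varpi/\partial x$. Note that the paper itself does not supply a proof of this proposition but simply cites it from \cite{Goto2015}; your outline reproduces that standard argument, including the observation that the pullback of $g^{\,\M}$ to $\cA_{\,\varpi}$ yields the Hessian metric.
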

%%%%%%%%%%%%%%%%%%%

To apply the proposition above to physical systems, 
the coordinate sets $x$ and $y$ are chosen such that  $x^{\,a}$ and $y_{\,a}$ 
form a thermodynamic conjugate pair for each $a$. 
Here it is assumed that 
such thermodynamic variables can be defined even for nonequilibrium states, 
and that they are consistent with those variables defined at equilibrium. 
In addition to this, 
the physical dimension of $\varpi$ should be equal to that of 
$y_{\,a}\,\dr x^{\,a}$. Also $\Psi$ and its Legendre transform are 
chosen as $\varpi$.

%%%%%%%%%%%%%%%%%%%%%%%%%%%%%%%%%%%%%%%%%%%%%%%%
\subsection{Geometry of nonequilibrium states}
%%%%%%%%%%%%%%%%%%%%%%%%%%%%%%%%%%%%%%%%%%%%%%%%
So far geometry of equilibrium states have been discussed. 
One remaining issue is 
how to give the physical meaning of 
the set outside $\cA_{\,\varpi}$, 
$\cC\setminus\cA_{\,\varpi}$. 
A natural interpretation   
of $\cC\setminus\cA_{\,\varpi}$  
would be some set of nonequilibrium states. We make this interpretation 
in this paper.

As shown in proposition\, \ref{fact-master-equation-solution}, 
initial states approach to the equilibrium state as time develops.
This can be reformulated on contact manifolds and para-contact metric 
manifolds.  
In the contact geometric framework of nonequilibrium thermodynamics, 
the equilibrium state is identified with a Legendre submanifold. 
Then, as found 
in \cite{Goto2015} and \cite{Bravetti2015},  
some dynamical systems expressing nonequilibrium process 
can be identified with a class of 
contact Hamiltonian vector fields on a contact manifold. 
The above claim also holds on para-contact metric manifolds.

%%%%%%%%%%%%%%%%%%%%%%%%%%%%%%%%%%%%%%%%%%%%%%%%%%%%%%%%%%%%%%%%5
\subsubsection{Geometry of moment dynamical system}
%%%%%%%%%%%%%%%%%%%%%%%%%%%%%%%%%%%%%%%%%%%%%%%%%%%%%%%%%%%%%%%
Proposition\,\ref{moment-dynamics} 
is written in a contact geometric language here.
In what follows phase space is identified with a $(2n+1)$-dimensional 
para-contact metric manifold $(\cC,\phi,\xi,\lambda,g^{\,\M})$.

As shown below, the moment dynamical system is a contact Hamiltonian system.
%%%%%%%%%%%%%%%%%%%%5
\begin{proposition}
\label{fact-moment-dynamics-contact-Hamiltonian-system}
(Moment dynamical system as a contact Hamiltonian system, \cite{GH2018}). 
%%%%%%%%%%%%%%%%%%%%
The dynamical system in proposition\,\ref{moment-dynamics}
can be written as a contact Hamiltonian system.
%%%%%%%%%%%%%%%%%%
\end{proposition}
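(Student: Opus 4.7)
The plan is to exhibit the moment dynamical system as the contact Hamiltonian flow of Proposition~\ref{proposition-goto-2015} under an explicit coordinate identification, so that the result reduces to a direct comparison of the two sets of ODEs.

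First I would fix the coordinate dictionary between the abstract $(2n+1)$-dimensional contact manifold $(\cC,\lambda)$ and the variables appearing in proposition~\ref{moment-dynamics}. Namely, set
$$
x^{\,a}\longleftrightarrow\theta^{\,a},\qquad
y_{\,a}\longleftrightarrow\avg{\cO_{\,a}}_{\,\theta},\qquad
z\longleftrightarrow\Psi,
$$
so that the contact one-form becomes
$\lambda=\dr\Psi-\avg{\cO_{\,a}}_{\,\theta}\,\dr\theta^{\,a}.$
This is consistent because $\Psi^{\,\eq}$ depends only on $\theta$ and, at equilibrium, $\partial\Psi^{\,\eq}/\partial\theta^{\,a}=\avg{\cO_{\,a}}_{\,\theta}^{\,\eq}$, so the Legendre submanifold $\cA_{\,\Psi^{\,\eq}}$ of \fr{Legendre-submanifold-psi} with $\varpi=\Psi^{\,\eq}$ coincides exactly with the locus of equilibrium states identified in Section~\ref{section-geometry-equilibrium-states}.

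Next I would choose the contact Hamiltonian
$$
h(\theta,\avg{\cO}_{\,\theta},\Psi)
:=\Psi^{\,\eq}(\theta)-\Psi,
$$
which has the form $h=\varpi(x)-z$ required by proposition~\ref{proposition-goto-2015} with $\varpi=\Psi^{\,\eq}$. Applying that proposition verbatim yields the contact Hamiltonian equations \fr{contact-hamilton-flow-pi}, which under the dictionary above read
$$
\frac{\dr}{\dr t}\theta^{\,a}=0,\qquad
\frac{\dr}{\dr t}\avg{\cO_{\,a}}_{\,\theta}
=\frac{\partial\,\Psi^{\,\eq}}{\partial\theta^{\,a}}-\avg{\cO_{\,a}}_{\,\theta},\qquad
\frac{\dr}{\dr t}\Psi=\Psi^{\,\eq}(\theta)-\Psi.
$$
These coincide term-by-term with the moment dynamical system stated in proposition~\ref{moment-dynamics}, completing the identification.

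Because proposition~\ref{proposition-goto-2015} does the geometric heavy lifting (invariance of $\cA_{\,\varpi}$, attractor property, exponential decay of $h$), the real content of this proposition is the legitimacy of the coordinate dictionary. The one point that merits explicit comment, rather than routine calculation, is dimensional and structural consistency: one must check that $(\theta,\avg{\cO}_{\,\theta},\Psi)$ are valid Darboux-type coordinates on $\cC$, i.e.\ that $\theta^{\,a}$ and $\avg{\cO_{\,a}}_{\,\theta}$ indeed form conjugate pairs (which is ensured by $\avg{\cO_{\,a}}_{\,\theta}^{\,\eq}=\partial\Psi^{\,\eq}/\partial\theta^{\,a}$ together with the Section~\ref{section-geometry-equilibrium-states} convention that $\Psi$ and its Legendre transform may serve as $\varpi$), and that $\Psi^{\,\eq}$ carries the same physical dimension as $\avg{\cO_{\,a}}_{\,\theta}\,\dr\theta^{\,a}$. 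This compatibility check is the only conceptual step; once it is in place the claim follows immediately from proposition~\ref{proposition-goto-2015}.
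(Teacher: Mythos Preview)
Your proposal is correct and matches the paper's approach: although the paper omits the proof here (deferring to \cite{GH2018}), Lemma~\ref{fact-length-Psi} explicitly identifies $h(\theta,\avg{\cO}_{\,\theta},\Psi)=\Psi^{\,\eq}(\theta)-\Psi$ as the contact Hamiltonian, confirming exactly the dictionary $x^{\,a}\leftrightarrow\theta^{\,a}$, $y_{\,a}\leftrightarrow\avg{\cO_{\,a}}_{\,\theta}$, $z\leftrightarrow\Psi$, $\varpi=\Psi^{\,\eq}$ that you use. The term-by-term comparison with \fr{contact-hamilton-flow-pi} is precisely the intended argument.
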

%%%%%%%%%%%%%%%%%%%
One is interested in how a $(1,1)$-tensor field $\phi$ plays a role for 
geometric nonequilibrium thermodynamics. 
To give an answer, one needs the following. 
%%%%%%%%%%%%
\begin{lemma}
\label{fact-phi-phi-X}
%%%%%%%%%%
Let $\{\dot{x}_{\,a}\},\{\dot{y}_{\,a}\},\dot{z}$ 
be some functions, and $X_{\,0}$ the vector field 
$$
X_{\,0}
=\dot{x}^{\,a}\frac{\partial}{\partial x^{\,a}}
+\dot{y}_{\,a}\frac{\partial}{\partial y_{\,a}}
+\dot{z}\frac{\partial}{\partial z}.
$$ 
Then, 
$\phi(X_{\,0})$ and $\phi^{\,2}(X_{\,0})$ are calculated as
$$
\phi^{\,\mu}(X_{\,0})
=
(-1)^{\,\mu}\,\dot{x}^{\,a}\left(\,
\frac{\partial}{\partial x^{\,a}}+y_{\,a}\frac{\partial}{\partial z} 
\,\right)+ 
\dot{y}_{\,a}\frac{\partial}{\partial y_{\,a}},\quad
\mu=1,2.
$$
%%%%%%%%%
\end{lemma}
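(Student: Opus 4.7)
The formulas for $\phi$ and $\phi^{\,2}$ act on arbitrary vector fields, so this is a direct computation in the frame/coframe given in the preliminaries. I would handle the two cases separately, starting with $\mu=2$, which is essentially free from the axioms.

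For $\mu=2$, I would invoke axiom (i) of the almost para-contact structure, $\phi^{\,2}=\Id-\lambda\otimes\xi$, together with the Darboux form $\lambda=\dr z - y_{\,a}\dr x^{\,a}$ and $\xi=\partial/\partial z$ recorded after equation \fr{metric-G-Bravetti}. A one-line computation gives $\lambda(X_{\,0})=\dot z - y_{\,a}\dot x^{\,a}$, so
$$
\phi^{\,2}(X_{\,0})
=X_{\,0}-(\dot z - y_{\,a}\dot x^{\,a})\,\frac{\partial}{\partial z}
=\dot x^{\,a}\!\left(\frac{\partial}{\partial x^{\,a}}+y_{\,a}\frac{\partial}{\partial z}\right)
+\dot y_{\,a}\frac{\partial}{\partial y_{\,a}},
$$
matching the asserted formula for $\mu=2$ with sign $(-1)^{\,2}=+1$.

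For $\mu=1$, I would substitute $X_{\,0}$ directly into $\phi=-\wh{\theta}_{\,-}^{\,a}\otimes e_{\,a}^{\,+}-\wh{\theta}_{\,+}^{\,a}\otimes e_{\,a}^{\,-}$. Evaluating the coframe on $X_{\,0}$ using the explicit expressions for $\wh{\theta}_{\,\pm}^{\,a}$ gives (no sum on $a$)
$$
\wh{\theta}_{\,\pm}^{\,a}(X_{\,0})
=\frac{1}{2\sqrt{y_{\,a}}}\bigl[\,y_{\,a}\dot x^{\,a}\pm\dot y_{\,a}\,\bigr].
$$
Multiplying by the frame vectors $e_{\,a}^{\,\pm}$, the $\sqrt{y_{\,a}}$ factors cancel against $1/(2\sqrt{y_{\,a}})$, and collecting the coefficients of the two basis directions $\frac{1}{y_{\,a}}(\partial/\partial x^{\,a}+y_{\,a}\,\partial/\partial z)$ and $\partial/\partial y_{\,a}$ yields, respectively, $-y_{\,a}\dot x^{\,a}$ and $+\dot y_{\,a}$; reconstituting this gives the stated $\mu=1$ formula.

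The whole argument is a routine expansion; the only place where I expect to have to be careful is index bookkeeping, in particular distinguishing the per-$a$ identities (\emph{no sum}) used to evaluate $\wh{\theta}_{\,\pm}^{\,a}(X_{\,0})$ from the sum over $a$ in the definition of $\phi$, and making sure the cross-terms between the $+$ and $-$ pieces combine as stated. Once that is done, writing both cases together as $(-1)^{\,\mu}\dot x^{\,a}\bigl(\partial/\partial x^{\,a}+y_{\,a}\,\partial/\partial z\bigr)+\dot y_{\,a}\,\partial/\partial y_{\,a}$ is immediate.
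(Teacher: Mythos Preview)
Your proposal is correct and follows essentially the same route as the paper: the paper also evaluates $\wh{\theta}_{\,\pm}^{\,a}(X_{\,0})$, combines the $e_{\,a}^{\,+}$ and $e_{\,a}^{\,-}$ contributions (via the sums $e_{\,a}^{\,+}\pm e_{\,a}^{\,-}$) to obtain the $\mu=1$ formula, and then handles $\mu=2$ by substituting $\lambda(X_{\,0})=\dot z-y_{\,a}\dot x^{\,a}$ into $\phi^{\,2}(X_{\,0})=X_{\,0}-\lambda(X_{\,0})\xi$. The only cosmetic difference is that you treat $\mu=2$ first.
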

%%%%%%%%%
\begin{proof}
%%%%%%%%%%%%%5
Throughout this proof, the Einstein convention is not used.
With the local expressions shown 
 in section\,\ref{section-preliminaries}, one 
has 
$$
\wh{\theta}_{\,\pm}^{\,a}(X_{\,0})
=\frac{\sqrt{y_{\,a}}}{2}\dot{x}^{\,a}\pm \frac{\dot{y_{\,a}}}{2\sqrt{y_{\,a}}},
$$
$$
e_{\,a}^{\,+}+e_{\,a}^{\,-}
=\frac{2}{\sqrt{y_{\,a}}}\left(\frac{\partial}{\partial x^{\,a}}
+y_{\,a}\frac{\partial}{\partial z}\right),\quad\mbox{and}\quad
e_{\,a}^{\,+}-e_{\,a}^{\,-}
=2\,\sqrt{y_{\,a}}\frac{\partial}{\partial y_{\,a}}.
$$
Combining these, one has 
\beqa
\phi(X_{\,0})
&=&\sum_{a}\left[\,-\,\wh{\theta}_{\,-}^{\,a}(X_{\,0})\,e_{\,a}^{\,+}
+\wh{\theta}_{\,+}^{\,a}(X_{\,0})\,e_{\,a}^{\,-}\,\right]
\non\\
&=&
\sum_{a}\left[\,-\,\frac{\sqrt{y_{\,a}}}{2}\dot{x}^{\,a}
(\,e_{\,a}^{\,+}+e_{\,a}^{\,-}\,)
+\frac{\dot{y}_{\,a}}{2\,\sqrt{y_{\,a}}}
(\,e_{\,a}^{\,+}-e_{\,a}^{\,-}\,)\right]
\non\\
&=&\sum_{a}\left[-\,\dot{x}^{\,a}\left(
\frac{\partial}{\partial x^{\,a}}+y_{\,a}\frac{\partial}{\partial z}
\right)+\dot{y}_{\,a}
\frac{\partial}{\partial y_{\,a}}\right].
\non
\eeqa
For $\phi^{\,2}(X_{\,0})$, substituting 
$\lambda(X_{\,0})=\dot{z}-\sum_{a}y_{\,a}\dot{x}^{\,a}$ into  
$\phi^{\,2}(X_{\,0})=X_{\,0}-\lambda(X_{\,0})\xi$, one has the desired expression. 
\qed
%%%%%%%%%%%%
\end{proof}
%%%%%%%%%%%
Applying this Lemma, one has the following. 
%%%%%%%%%%%%%%%%%%%%%
\begin{theorem}
\label{theorem-phi}
%%%%%%%%%%%%%%%%%
(Roles of $\phi$ of $X_{\,h}$ for the moment dynamical system). 
Let $X_{\,h}$ be the contact Hamiltonian vector field in 
Proposition\,\ref{proposition-goto-2015}. Then
$$
\cL_{\,\phi(X_{\,h})}h
=\cL_{\,\phi^{\,2}(X_{\,h})}h
=0.
$$
%%%%%%%%%%%%%%%%%%%
\end{theorem}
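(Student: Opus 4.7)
The plan is to combine Proposition~\ref{proposition-goto-2015} with Lemma~\ref{fact-phi-phi-X} and then exploit the very special form of $h$. First, from Proposition~\ref{proposition-goto-2015} I would read off the components of $X_{\,h}$ in the coordinate basis: $\dot{x}^{\,a}=0$, $\dot{y}_{\,a}=\partial\varpi/\partial x^{\,a}-y_{\,a}$, and $\dot{z}=\varpi(x)-z$. The crucial observation is that $\dot{x}^{\,a}=0$ identically.

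Next, I would feed these components into the formula of Lemma~\ref{fact-phi-phi-X}. Because the $(-1)^{\,\mu}\dot{x}^{\,a}$ term drops out entirely, the formula collapses to
\[
\phi^{\,\mu}(X_{\,h})
=\dot{y}_{\,a}\,\frac{\partial}{\partial y_{\,a}}
=\left(\frac{\partial\varpi}{\partial x^{\,a}}-y_{\,a}\right)\frac{\partial}{\partial y_{\,a}},
\qquad \mu=1,2.
\]
In particular $\phi(X_{\,h})$ and $\phi^{\,2}(X_{\,h})$ coincide, and both are vector fields purely in the $\partial/\partial y_{\,a}$ directions (no components along $\partial/\partial x^{\,a}$ or $\partial/\partial z$).

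Finally, since the Lie derivative of a function along a vector field is just that vector field applied to the function, and since $h(x,y,z)=\varpi(x)-z$ has no dependence on $y$, I obtain
\[
\cL_{\,\phi^{\,\mu}(X_{\,h})}h
=\phi^{\,\mu}(X_{\,h})(h)
=\left(\frac{\partial\varpi}{\partial x^{\,a}}-y_{\,a}\right)\frac{\partial h}{\partial y_{\,a}}
=0,
\]
which settles both equalities simultaneously. There is no real obstacle here: the proof is essentially a bookkeeping exercise combining the two preceding results, with the only conceptual input being the recognition that $\dot{x}^{\,a}=0$ annihilates the sign-dependent part of the Lemma, making the $\mu=1$ and $\mu=2$ cases collapse to a single $y$-directed vector field on which the $y$-independent Hamiltonian is constant.
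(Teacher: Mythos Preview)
Your proof is correct and follows essentially the same route as the paper: substitute $\dot{x}^{\,a}=0$ from Proposition~\ref{proposition-goto-2015} into the formula of Lemma~\ref{fact-phi-phi-X} to reduce $\phi^{\,\mu}(X_{\,h})$ to a pure $\partial/\partial y_{\,a}$ vector field, then use $\partial h/\partial y_{\,a}=0$ to conclude. Your additional remark that $\phi(X_{\,h})=\phi^{\,2}(X_{\,h})$ is a nice observation not made explicit in the paper, but the argument is otherwise identical.
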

%%%%%%%%%%%%%%%%%5
\begin{proof}
Substituting $\dot{x}^{\,a}=0$  into 
$\phi^{\,\mu}(X_{\,0})$ in Lemma\,\ref{fact-phi-phi-X}, one has 
$$
\phi^{\,\mu}(X_{\,h})
=\dot{y}_{\,a}\frac{\partial}{\partial y_{\,a}},
\qquad \mbox{where}\quad
\dot{y}_{\,a}
=\frac{\partial\varpi}{\partial x^{\,a}}-y_{\,a},
\qquad \mu=1,2.
$$
Then, with $\partial h/\partial y_{\,a}=0$, one has
$$
\cL_{\phi^{\,\mu}(X_{\,h})}h
=\left[\phi^{\,\mu}(X_{\,h})\right]\,h
=0,\qquad \mu=1,2.
$$
\qed
%%%%%%%%%%%%
\end{proof}
%%%%%%%%%%%%%
This states that the $h$ is preserved along $\phi^{\,\mu}(X_{\,h})\in\GTC$, 
which should be compared with the case of $\cL_{\,X_{\,h}}h$ : 
$$
\cL_{\,X_{\,h}}h
=-\,\dot{z}
=\,-(\varpi(x)-z)
=-\,h.
$$ 
%%%%%%%%%%%%5
\subsubsection{Curve  length from the equilibrium state }
%%%%%%%%%%%%5
In nonequilibrium statistical physics,  
attention is often concentrated on 
how far a state is close to the equilibrium state.
In general, to define and measure  
such a distance in terms of geometric language,     
length of a curve can be used. 
In Riemannian geometry, 
 length is a measure 
for expressing how far given two points are away, where 
these points are  
connected with an integral curve of a vector field on a 
 manifold. 

The following can easily be proven. 
%%%%%%%%%%%%%%%
\begin{lemma}
\label{fact-length-Psi}
%%%%%%%%%%%%%
 (\cite{GH2018}). The length between a state and the equilibrium state for
the moment dynamical system calculated  with 
\fr{metric-G-Bravetti} is 
\beq
l[X_{\,\Psi}]_{\,\infty}^{\,t}
:=\int_{\infty}^{t}\sqrt{g^{\,\M}(X_{\,\Psi},X_{\,\Psi})}\,\dr t
=|\,h(\theta,\avg{\cO}_{\,\theta},\Psi)\,|,
\label{length-Psi-h}
\eeq
where 
$\avg{\cO}_{\,\theta}=\{\avg{\cO_{\,1}}_{\,\theta},\ldots,\avg{\cO_{\,n}}_{\,\theta}\}$, 
$h$ is such that $h(\theta, \avg{\cO}_{\,\theta},\Psi)=\Psi^{\,\eq}(\theta)-\Psi$ 
(see proposition\,\ref{proposition-goto-2015}), 
and $X_{\Psi}$ its corresponding contact Hamiltonian vector field.
Then the convergence rate for \fr{length-Psi-h} is exponential. 
%%%%%%%%%%%5
\end{lemma}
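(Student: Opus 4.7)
The plan is to compute $g^{\,\M}(X_{\,\Psi},X_{\,\Psi})$ pointwise in the Darboux coordinates $(x,y,z)=(\theta,\avg{\cO}_{\,\theta},\Psi)$, exploit the vanishing of the $\theta$-components of $X_{\,\Psi}$ to kill the cross term in \fr{metric-G-Bravetti}, and then invoke part~4 of Proposition\,\ref{proposition-goto-2015} to evaluate the resulting arclength integral.

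By Proposition\,\ref{proposition-goto-2015} applied with $\varpi=\Psi^{\,\eq}$, the contact Hamiltonian vector field $X_{\,\Psi}$ associated to $h=\Psi^{\,\eq}(\theta)-\Psi$ has components $\dot{\theta}^{\,a}=0$, $\dr\avg{\cO_{\,a}}_{\,\theta}/\dr t=\partial\Psi^{\,\eq}/\partial\theta^{\,a}-\avg{\cO_{\,a}}_{\,\theta}$, and $\dot{\Psi}=\Psi^{\,\eq}-\Psi=h$. The vanishing of $\dot{\theta}^{\,a}$ is the crucial point: the symmetric cross term $\frac{1}{2}(\dr x^{\,a}\otimes \dr y_{\,a}+\dr y_{\,a}\otimes \dr x^{\,a})$ in \fr{metric-G-Bravetti} evaluates on $(X_{\,\Psi},X_{\,\Psi})$ to $\dot{\theta}^{\,a}(\dr\avg{\cO_{\,a}}_{\,\theta}/\dr t)=0$, while $\lambda(X_{\,\Psi})=\dot{\Psi}-\avg{\cO_{\,a}}_{\,\theta}\dot{\theta}^{\,a}=\dot{\Psi}=h$. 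Consequently,
$$
g^{\,\M}(X_{\,\Psi},X_{\,\Psi})=\lambda(X_{\,\Psi})^{\,2}=h^{\,2},
$$
so $\sqrt{g^{\,\M}(X_{\,\Psi},X_{\,\Psi})}=|h|$ along the flow. This is the main geometric content of the lemma, and it is essentially immediate once one notices which components of $X_{\,\Psi}$ vanish.

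For the remaining step I would invoke $h(s)=\e^{\,-\,s}\,h(0)$ from part~4 of Proposition\,\ref{proposition-goto-2015} to rewrite the integrand as $\e^{\,-\,s}|h(0)|$, which immediately produces both the closed form $|h(t)|=\e^{\,-\,t}|h(0)|$ and the exponential convergence rate. The one piece of bookkeeping that needs a little care is the orientation of the improper integral in \fr{length-Psi-h}: with the lower limit at $+\infty$, evaluating $\int_{\infty}^{t}\e^{\,-\,s}\dr s$ produces $-\,\e^{\,-\,t}$, so one must either interpret $l[X_{\,\Psi}]_{\,\infty}^{\,t}$ as the unsigned arclength $\int_{t}^{\infty}|h(s)|\,\dr s=|h(t)|$, or absorb the sign into the absolute value on the right-hand side of \fr{length-Psi-h}. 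This orientation issue is the only real subtlety; everything else is direct substitution.
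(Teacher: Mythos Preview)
Your proposal is correct and is exactly the direct computation the paper has in mind: the paper does not spell out a proof of this lemma (it merely states that it ``can easily be proven'' and cites \cite{GH2018}), and your argument---killing the $\dr x^{\,a}\otimes\dr y_{\,a}$ cross term via $\dot{\theta}^{\,a}=0$, reducing $g^{\,\M}(X_{\,\Psi},X_{\,\Psi})$ to $\lambda(X_{\,\Psi})^{\,2}=h^{\,2}$, and then integrating $|h(s)|=\e^{-s}|h(0)|$ using part~4 of Proposition\,\ref{proposition-goto-2015}---is precisely that easy computation. Your remark on the orientation of the improper integral is a fair caveat and is the only point requiring any interpretation.
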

%%%%%%%%%%%%%

Combining Lemma\,\ref{fact-length-Psi} 
and 
discussions in the previous sections, 
one arrives at 
the main theorem in this paper.
%%%%%%%%%%%
\begin{theorem}
\label{theorem-arxiv-2018}
(Geometric description of the expectation variables and 
its convergence). 
%%%%%%%%%%
The moment dynamical system derived from solvable master equations are 
described on a para-contact metric manifold, and 
its convergence rate associated with 
the metric tensor field \fr{metric-G-Bravetti} is exponential. 
%%%%%%%%%
\end{theorem}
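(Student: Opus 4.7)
The plan is to assemble the theorem from three results already established in the excerpt: Proposition \ref{proposition-goto-2015}, Proposition \ref{fact-moment-dynamics-contact-Hamiltonian-system}, and Lemma \ref{fact-length-Psi}. First I would fix the geometric setting by identifying the $(2n+1)$ state variables of the moment dynamical system with Darboux coordinates on a para-contact metric manifold. Concretely, I would put $x^{\,a}=\theta^{\,a}$, $y_{\,a}=\avg{\cO_{\,a}}_{\,\theta}$, and $z=\Psi$, take $\lambda=\dr z-y_{\,a}\dr x^{\,a}$, and endow $\cC$ with the Mrugala metric tensor field \fr{metric-G-Bravetti}, so that $(\cC,\phi,\xi,\lambda,g^{\,\M})$ is a para-contact metric manifold in the sense of Section \ref{section-preliminaries}.

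Second, I would invoke Proposition \ref{fact-moment-dynamics-contact-Hamiltonian-system} to rewrite the equations of Proposition \ref{moment-dynamics} as a contact Hamiltonian system $X_{\,h}$ with contact Hamiltonian $h=\Psi^{\,\eq}(\theta)-\Psi$. Under the identification above this is precisely the Hamiltonian $h(x,y,z)=\varpi(x)-z$ of Proposition \ref{proposition-goto-2015} with $\varpi=\Psi^{\,\eq}$, so items (2)--(4) of that proposition apply directly: the equilibrium set corresponds to the Legendre submanifold $\cA_{\,\Psi^{\,\eq}}$, which is an attractor, and along any integral curve one has $h(t)=\e^{\,-t}h(0)$.

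Third, to read off the convergence rate associated with $g^{\,\M}$, I would apply Lemma \ref{fact-length-Psi}, which identifies the length of an integral curve from the equilibrium state with $|h(\theta,\avg{\cO}_{\,\theta},\Psi)|$. Combining this identity with the exponential time-dependence of $h$ from Proposition \ref{proposition-goto-2015} immediately yields $l[X_{\,\Psi}]_{\,\infty}^{\,t}=\e^{\,-t}|h(0)|$, which is the claimed exponential convergence rate with respect to the Mrugala metric.

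The proof is therefore essentially an assembly step, and the main obstacle is not computational but interpretive: one must verify that the identification of $(\theta,\avg{\cO}_{\,\theta},\Psi)$ with the Darboux coordinates $(x,y,z)$ is consistent with the standing assumptions used to derive the co-frame in Section \ref{section-preliminaries}, in particular that $y_{\,a}>0$ translates into a well-defined regime for the expectation variables $\avg{\cO_{\,a}}_{\,\theta}$, and that $\Psi^{\,\eq}$ genuinely plays the role of the generating function $\varpi$ (i.e.\ depends only on $x=\theta$ and is strictly convex). Once these identifications are justified, citing Propositions \ref{proposition-goto-2015} and \ref{fact-moment-dynamics-contact-Hamiltonian-system} together with Lemma \ref{fact-length-Psi} closes the argument.
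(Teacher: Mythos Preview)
Your proposal is correct and follows essentially the same approach as the paper: the paper's entire argument is the single sentence ``Combining Lemma\,\ref{fact-length-Psi} and discussions in the previous sections, one arrives at the main theorem,'' and your three-step assembly (identify $(\theta,\avg{\cO}_{\,\theta},\Psi)$ with Darboux coordinates, invoke Proposition\,\ref{fact-moment-dynamics-contact-Hamiltonian-system} to recognize the contact Hamiltonian system of Proposition\,\ref{proposition-goto-2015}, then apply Lemma\,\ref{fact-length-Psi}) is exactly a fleshed-out version of that sentence. Your added caveats about the standing assumption $y_{\,a}>0$ and the strict convexity of $\Psi^{\,\eq}$ are appropriate and in fact more careful than the paper itself.
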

%%%%%%%%%%
%%%%%%%%%%%%%%%%%%%%%%%%% 
\section{Conclusions}
\label{section-conclusion}
%%%%%%%%%%%%%%%%%%%%%%%5
This paper has offered a viewpoint  
that expectation variables of the moment dynamical system derived from 
master equations  
can be described 
on a para-contact metric manifold.   
To give a geometric description of these variables  
a contact Hamiltonian vector field has  
been introduced on a para-contact metric manifold. 
Also, roles of the $(1,1)$-tensor field $\phi$ have been clarified 
in this paper ( Theorem\,\ref{theorem-phi} ). 
Then, with the Mrugala  metric tensor field, the convergence rate 
has been shown to be exponential on this para-contact metric manifold\\
( Theorem\,\ref{theorem-arxiv-2018} ).

%%%%%%%%%%%%%%%%%%%%%%%%%%%%%%%%
\section*{Acknowledgments}
%%%%%%%%%%%%%%%%%%%%%%%%%%%%%%%%
The author S.G. is partially supported by JSPS 
(KAKENHI) grant number 19K03635. 
The other author H.H. is partially supported 
by JSPS (KAKENHI) grant number 17H01793.
In addition, both of the authors are partially supported 
by JST CREST JPMJCR1761. 
%%%%%%%%%%%%%%%%%%%%%%%%%%%

%%%%%%%%%%%%%%%%%%%%%
\end{document}